\newtheorem{lemma}{Lemma}
\newtheorem{prop}[lemma]{Proposition}
\newtheorem{coro}[lemma]{Corollary}
\newtheorem{thm}[lemma]{Theorem}
\theoremstyle{remark}
\newtheorem{rem}[lemma]{Remark}
\def\Ha{\mathcal H}
\def\Ka{\mathcal K}
\def\Fe{\mathcal F}
\def\Tr{\mathrm{Tr}\,}
\def\states{\mathfrak S}
\def\<{\langle}
\def\>{\rangle}
\def\supp{\mathrm{supp}}
\begin{document}
\title{Preservation of a quantum  R\'enyi relative entropy implies existence of a recovery map}
\author{Anna Jen\v cov\'a\thanks{jenca@mat.savba.sk}\\ \small \emph{Mathematical Institute, Slovak Academy of Sciences}\\
\small \emph{ \v Stef\'anikova 49, 814 73 Bratislava, Slovakia}}
\date{}
\maketitle
\abstract{It is known that a necessary and sufficient condition for equality in the data processing inequality (DPI) for the  quantum relative entropy is the existence of a recovery map.  We show that equality in DPI for a sandwiched R\'enyi relative $\alpha$-entropy with $\alpha>1$ is also equivalent to this property. For the proof, we use an interpolating family of $L_p$-norms with respect to a state.}

\section{Introduction}

One of the key notions of quantum information theory is that of a relative entropy, or divergence, which can be seen as a measure of information-theoretic difference between quantum states. A fundamental property of a relative entropy $D$ is the  data processing inequality (DPI)
\begin{equation}\label{eq:dpi}
D(\Phi(\rho)\|\Phi(\sigma))\le D(\rho\|\sigma),
\end{equation}
which holds for any pair of quantum  states $\{\rho, \sigma\}$ and any quantum channel $\Phi$. This property expresses the fact that  a physical transformation, 
represented by the channel $\Phi$,  cannot increase distinguishability of states. It is clear that if there is a quantum channel $\Psi$  such that
\[
\Psi\circ\Phi(\sigma)=\sigma\mbox{ and }\Psi\circ\Phi(\rho)=\rho,
\]
then  equality in DPI holds.  Such a channel $\Psi$, if it exists, is called a recovery map for the triple $(\Phi,\rho,\sigma)$. The aim of the present work is to prove the opposite implication, in the case that $D$ is a quantum version of the R\'enyi relative entropy.  

One of the most important quantum relative entropies is the Umegaki relative entropy \cite{umegaki1962conditional}, which has an operational significance as a distinguishability measure on quantum states \cite{hipe1991properformula, ogna2000strong} and is related to  many other entropic quantities.
For two states $\rho$ and $\sigma$, it is defined as
\[
D_1(\rho\|\sigma)=\Tr \rho\left(\log(\rho)-\log(\sigma)\right),
\]
if the support $\supp(\rho)$ of $\rho$ is contained in $\supp(\sigma)$ and is infinite 
otherwise. Data processing inequality for $D_1$ was proved in \cite{lindblad1975completely, uhlmann1977relative, petz1984quasi}, recently also for positive trace-preserving maps \cite{mhre2015monotonicity}.  See also  \cite{ruskai2002inequalities} for a discussion including some related inequalities and equality conditions.

For $\alpha\in (0,1)\cup (1,\infty)$, the standard  quantum version of the  R\'enyi relative $\alpha$-entropy is defined by \cite{ohpe1993entropy} 
\[
D_\alpha(\rho\|\sigma)= \left\{\begin{array}{cc}    \frac1{\alpha-1} \log\left( \Tr \rho^\alpha\sigma^{1-\alpha}\right)& \mathrm{if }\ \alpha\in (0,1) \mbox{ or }\supp(\rho)\subseteq \supp(\sigma)\\ & \\
 \infty & \mathrm{otherwise}.
 \end{array}
 \right. 
\]
This quantity can be derived from  quantum $f$-divergences, or quasi-entropies, defined in \cite{petz1984quasi} and is a straightforward generalization of the classical family of R\'enyi relative  $\alpha$-entropies  \cite{renyi1961onmeasures}. The values for 
$\alpha=0,1,\infty$ can be obtained by taking limits, in particular, $\lim_{\alpha\to 1}D_\alpha=D_1$, the Umegaki relative entropy. The R\'enyi relative $\alpha$-entropy  satisfies DPI for $\alpha\in [0,2]$, see \cite{petz1984quasi,hmpb2011divergences}.
 For $\alpha\in (0,1)$, the relative $\alpha$-entropies appear in error exponents \cite{hayashi2007exponet, nagaoka2006theconverse, ansv2008chernoff,  hmo2008exponents} and as cutoff rates  \cite{himo2011onthequantum} 
in quantum hypothesis testing.

It was an important observation by Petz  that if $\Phi$ is a quantum channel and $\rho$, $\sigma$ are states such that $D_1(\rho\|\sigma)$ is finite, then the equality
\[
D_1(\Phi(\rho)\|\Phi(\sigma))= D_1(\rho\|\sigma)
\] 
is equivalent to existence of a recovery map for $(\Phi,\rho,\sigma)$.
 This was first proved in \cite{petz1986sufficient} in the general framework of von Neumann algebras, in the case that $\Phi$ is a restriction to a subalgebra. In \cite{petz1988sufficiency}, this result was extended to subsets of states and any channel $\Phi$. Here the Umegaki relative entropy was replaced by transition probability, which is closely related to the R\'enyi relative entropy $D_{1/2}$, but the proof, using integral representations of operator convex functions, can be easily 
 extended to $D_\alpha$ for all $\alpha\in (0,2)$, see \cite{jepe2006sufficiency,jepe2006idaqp, hmpb2011divergences}.  

In analogy with the classical notion of a sufficient statistic \cite{fisher1922onthemathematical}, in particular its characterization  as a statistical isomorphism (see e.g. \cite{strasser1985statistics}), Petz in \cite{petz1988sufficiency} called  a channel $\Phi$ sufficient with respect to  $\{\rho,\sigma\}$ if there is  a recovery map for $(\Phi,\rho,\sigma)$.
The fundamental result of \cite{petz1986sufficient} is thus a quantum generalization of the well-known characterization of classical sufficient statistics in terms of the Kullback-Leibler divergence, \cite{kule1951oninformation}. 

Sufficiency of channels, sometimes also called reversibility,  and its equivalent characterizations were subsequently studied by several authors. On one hand, the characterization by equality in DPI was extended to a large class of quantum divergences, such as $f$-divergences, Fisher information  and some distinguishability measures related to quantum hypothesis testing,  \cite{hmpb2011divergences,jencova2012reversibility}. On the other hand, it was proved that sufficiency is equivalent to a certain factorization structure of both the channel and the involved states, \cite{mope2004sufficient,  jepe2006sufficiency,jepe2006idaqp}.   See \cite{hmpb2011divergences} for a review of quantum $f$-divergences and conditions for sufficiency, and \cite{jencova2012reversibility, himo2016different} for some extensions and further results.       
Moreover, sufficiency has been studied also in the framework of quantum extension of Blackwell's comparison of statistical experiments \cite{buscemi2012comparison} and  in the setting of bosonic channels  \cite{shirokov2013reversibility}. 

The theory of sufficient channels proved useful  for finding  equality conditions in some  quantum theoretic inequalities, most notably for characterization of the quantum Markov property by equality in strong subadditivity of von Neumann  entropy, \cite{hjpw2004ssa,jepe2006sufficiency}. Further, equality conditions were obtained for Holevo quantity and related entropic inequalities, \cite{shirokov2013reversibility,shirokov2014onequalities} and for  convexity of some $f$-divergences, as well as in some related Minkowski inequalities, \cite{jeru2010unified}.  See also \cite{osiy2005quantum} for applications to secret sharing schemes.

Recently, another quantum version of  R\'enyi relative entropy was introduced in \cite{wwy2014strong} and independently in \cite{mldsft20130nquantum}. It is the so-called sandwiched R\'enyi relative $\alpha$-entropy, defined for $\alpha>0$, $\alpha\ne 1$ as 
\[
\tilde D_\alpha(\rho \| \sigma)=\left\{\begin{array}{cc} \frac{1}{\alpha-1}\log \Tr \left[\left( \sigma^{\frac{1-\alpha}{2\alpha}}\rho
\sigma^{\frac{1-\alpha}{2\alpha}}\right)^\alpha\right] & \mathrm{if }\ \supp(\rho)\subseteq \supp(\sigma)\\ & \\
 \infty & \mathrm{otherwise}
 \end{array}
 \right.
\]
Again, the values for $\alpha=0,1,\infty$ are obtained by taking limits, \cite{dale2014alimit, mldsft20130nquantum, wwy2014strong} and we have 
$\tilde D_1=D_1$ and $\tilde D_\infty=D_{max}$ , where 
\begin{equation}\label{eq:dmax}
D_{max}(\rho\|\sigma)=\log\inf\{\lambda>0, \rho\le \lambda\sigma\}
\end{equation}
 is the max-relative entropy, introduced in \cite{datta2009min}. Moreover, $\tilde D_\alpha$ satisfies DPI for $\alpha\in [1/2,\infty]$, \cite{mldsft20130nquantum, wwy2014strong, beigi2013sandwiched, frli2013monotonicity}. For $\alpha>1$ the sandwiched R\'enyi relative entropies $\tilde D_\alpha$ have an operational interpretation  as strong converse exponents in quantum hypothesis testing \cite{moog2015quantum}, quantum channel discrimination \cite{cmw2016strong} and classical-quantum channel coding \cite{moog2014strong}.

It is a natural question, mentioned also in  \cite{moog2015quantum} and  \cite{dawi2015quantum},   whether equality in DPI  for $\tilde D_\alpha$ characterizes sufficiency. An algebraic equality condition for $\alpha\in (1/2,1)\cup(1,\infty)$ was proved in the recent paper \cite{lrd2016data} and applications to R\'enyi  versions of Araki-Lieb inequalities and entanglement measures were found, but the question whether this implies existence of a recovery map remained open. Partial results for a larger family of $\alpha$-$z$-R\'enyi entropies 
 were also obtained in \cite{himo2016different}, under some restrictions on the triple
 $(\Phi,\rho,\sigma)$. 
 
In the present work, we answer this question in the affirmative for $\alpha>1$ and any 
 triple $(\Phi,\rho,\sigma)$ where $\Phi$ is a completely positive (or 2-positive) trace preserving map and $\supp(\rho)\subseteq \supp(\sigma)$. Since the sandwiched R\'enyi relative  entropies do not belong to the family of $f$-divergences, the techniques used e.g. in \cite{hmpb2011divergences} cannot be applied to this case. 
We will use  a method based on an interpolating family of $L_p$-norms, close to the approach introduced  in \cite{beigi2013sandwiched}. The complex interpolation method can also be utilized to extend the definition of $\tilde D_\alpha$ to infinite dimensions, as will be done in a subsequent paper.

\section{Preliminaries}

Throughout the paper, we will restrict to finite dimensional Hilbert spaces. For a  Hilbert space $\Ha$, we  denote the algebra of bounded linear  operators on $\Ha$ by $B(\Ha)$ and the cone of positive operators by $B(\Ha)^+$. States on $B(\Ha)$  are identified with  positive operators with unit trace, the set of all states will be denoted by $\states(\Ha)$.  For  $X\in B(\Ha)^+$, $\supp(X)$ denotes the support of $X$.

 For $p\ge 1$, the Schatten $p$-norm in $B(\Ha)$ is introduced as
\[
\|X\|_p=\left(\Tr |X|^p\right)^{1/p},\qquad X\in B(\Ha)
\]
and  we put $\|\cdot\|_\infty=\|\cdot\|$, the operator norm. The space $B(\Ha)$ equipped  with the norm $\|\cdot\|_2$ becomes a Hilbert space, with respect to the  Hilbert-Schmidt inner product
\[
\<X,Y\>=\Tr X^*Y,\qquad X,Y\in B(\Ha).
\]

If $\Ka$ is another Hilbert space and $\Phi:B(\Ha)\to B(\Ka)$ is a linear map, its adjoint with respect to the Hilbert-Schmidt inner product will be denoted by $\Phi^*$.  
The map $\Phi$ is trace-preserving if $\Tr \Phi(X)=\Tr X$ for all $X\in B(\Ha)$ and 
unital if $\Phi(I)=I$. Further, $\Phi$ is positive if $\Phi(B(\Ha)^+)\subseteq B(\Ka)^+$and  $n$-positive if $id_n\otimes \Phi$ is positive, where $id_n$ is the identity map on $B(\mathbb C^n)$. $\Phi$ is completely positive if it is $n$-positive for all $n\in \mathbb N$. A completely positive trace-preserving map is called a quantum channel. 
   
We will also need the notion of a conditional expectation  on  $B(\Ha)$, which is defined as a  positive unital map $E$ onto a subalgebra $\Fe\subseteq B(\Ha)$, with the property 
\begin{equation}\label{eq:condexp}
E(AXB)=AE(X)B, \qquad \forall A,B\in \Fe, X\in B(\Ha),
\end{equation}
see e.g. \cite{ohpe1993entropy,petz2008quantum}. Note that any conditional expectation is completely positive and  $E^2=E$.

\subsection{Non-commutative $L_p$-spaces with respect to a state}

The use of complex interpolation method appears in the study of the sandwiched  R\'enyi relative entropies  in the work by Beigi, \cite{beigi2013sandwiched}.
For our results, we use an interpolating  family of $L_p$-norms with respect to a state  $\sigma\in \states(\Ha)$, which are closely related but not the same as those introduced in \cite{beigi2013sandwiched}, see Remark \ref{rem:beigi} below. In a more general setting of von Neumann algebras, the corresponding $L_p$-spaces were introduced in \cite{trunov1979anoncommutative, zolotarev1982lpspaces} and are contained in a larger family of interpolation $L_p$-spaces defined in \cite{kosaki1984applications}.  

For $1\le p\le \infty$ and $1/p+1/q=1$, let $L_{p,\sigma}(\Ha)$ denote the linear space  of operators $Y\in B(\Ha)$, such that  $Y=\sigma^{1/2q}X\sigma^{1/2q}$ for some $X\in B(\Ha)$, equipped with the norm 
\[
\|Y\|_{p,\sigma}=\|X\|_p.
\]
It is clear that for any $p$,  $L_{p,\sigma}(\Ha)$ coincides with the subspace $L_\sigma(\Ha)$ of operators whose support and range lie in $\supp(\sigma)$, which can be identified with $B(\supp(\sigma))$. We will use this identification without further notice below, so that for example we will write  $\sigma^z$ for the operator $(\sigma|_{\supp(\sigma)})^z\oplus 0$ for any $z\in \mathbb C$. In this way, we may write
\begin{equation}\label{eq:pnorm}
\|Y\|_{p,\sigma}= \|\sigma^{\frac{1-p}{2p}}Y\sigma^{\frac{1-p}{2p}}\|_p,\qquad Y\in L_{\sigma}(\Ha).
\end{equation}
It follows from the properties of Schatten $p$-norms that this indeed defines a norm.
For $p=1$, the norm $\|\cdot\|_{1,\sigma}$ is just  $\|\cdot\|_1$ restricted to $L_{\sigma}(\Ha)$ and 
 for $p=\infty$,
\begin{equation}\label{eq:linfty}
\|Y\|_{\infty,\sigma}=\|\sigma^{-1/2}Y\sigma^{-1/2}\|.
\end{equation}
 Note also that if $\rho\in\states(\Ha)$, then $\rho\in L_{\sigma}(\Ha)$ if and only if $\rho\le \lambda\sigma$ for some $\lambda>0$ and in this case
\[
\|\rho\|_{\infty,\sigma}=\inf\{\lambda>0, \rho\le \lambda\sigma\}=2^{D_{max}(\rho\|\sigma)}.
\]

For $Z,Y\in L_{\sigma}(\Ha)$, we define
\[
\<Z,Y\>_\sigma=\Tr Z\sigma^{-1/2}Y\sigma^{-1/2}.
\]
Let $1\le p\le \infty$ and $1/p+1/q=1$, then
\[
\<Z,Y\>_\sigma= \Tr \left(\sigma^{\frac{1-q}{2q}}Z\sigma^{\frac{1-q}{2q}}\right)\left(\sigma^{\frac{1-p}{2p}}Y\sigma^{\frac{1-p}{2p}}\right).
\]
 By duality of the corresponding Schatten norms, the following duality holds:
\begin{equation}\label{eq:duality}
\|Y\|_{p,\sigma}=\sup_{\|Z\|_{q,\sigma}\le 1} |\<Z,Y\>_\sigma|.
\end{equation}
Using Remark \ref{rem:beigi} below, this can be also obtained from \cite{beigi2013sandwiched}.
 The space $L_{2,\sigma}(\Ha)$ is a Hilbert space with respect to the inner product
\[
(Y_1,Y_2)\mapsto \<Y_1^*,Y_2\>_\sigma.
\]

\begin{rem}\label{rem:beigi}  Let us denote the norm in \cite{beigi2013sandwiched} by $\|\cdot\|_{p,\sigma,B}$, then it is easy to see that for $Y\in L_{\sigma}(\Ha)$, 
\[
\|Y\|_{p,\sigma}=\|\sigma^{-1/2}Y\sigma^{-1/2}\|_{p,\sigma,B}
\]
and one can switch the results in \cite{beigi2013sandwiched} to our setting by means of the map $\Gamma_\sigma: X\mapsto \sigma^{1/2}X\sigma^{1/2}$ and its inverse.  
We often use this relation to our advantage, but our definition  seems preferable for the present work, since it leads to simpler expressions. For example, the sandwiched R\'enyi relative entropy is obtained directly as logarithm of the norm and the Petz recovery map is  the adjoint with respect to the inner product $\<\cdot,\cdot\>_\sigma$ (see (\ref{eq:adjointsigma}) and (\ref{eq:petzrecovery}) below). 

On the other hand, it was proved in \cite{zolotarev1982lpspaces, kosaki1984applications} that the spaces $L_{p,\sigma}(\Ha)$ can be obtained by the complex interpolation method and most of the results of this and the next subsection can be obtained from these works. Nevertheless, in order not to go into unnecessary technicalities, we prefer to give direct proofs or use the relation to \cite{beigi2013sandwiched}. Let us also remark that this family of norms was studied also in \cite{olze1999hypercontractivity}, where further results can be found. 

\end{rem}

We next define certain operator valued analytic functions on the stripe $S=\{z\in \mathbb C,\ 0\le Re(z)\le 1\}$
 which will be useful for evaluation of the norm $\|\cdot\|_{p,\sigma}$. So let $Y\in L_{\sigma}(\Ha)$ and $1<p<\infty$. Let $X:=\sigma^{-1/2q}Y\sigma^{-1/2q}$ and let $X=U|X|$ be the polar decomposition. We will denote 
\begin{equation}\label{eq:fyp}
f_{Y,p}(z)=\|Y\|_{p,\sigma}^{1-zp}\sigma^{(1-z)/2}U|X|^{zp}\sigma^{(1-z)/2},\qquad z\in S.
\end{equation}
Then $f_{Y,p}$ is a bounded continuous function $S\to L_\sigma(\Ha)$, holomorphic in the interior and $f_{Y,p}(1/p)=Y$. Moreover, we have
\[
f_{Y,p}(it)=\|Y\|_{p,\sigma}^{1-itp}\sigma^{1/2}\left(\sigma^{-it/2}U|X|^{itp}\sigma^{-it/2}\right)\sigma^{1/2}
\]
and by \eqref{eq:linfty},
\begin{equation}\label{eq:attain1}
\|f_{Y,p}(it)\|_{\infty,\sigma}= \|Y\|_{p,\sigma}\|\sigma^{-it/2}U|X|^{itp}\sigma^{-it/2}\|=\|Y\|_{p,\sigma},\quad t\in \mathbb R.
\end{equation}
Similarly, for all  $t\in \mathbb R$,
\begin{equation}\label{eq:attain2}
\|f_{Y,p}(1+it)\|_1=\|Y\|_{p,\sigma}^{1-p}\|\sigma^{-it/2}U|X|^{p+pit}\sigma^{-it/2}\|_1=\|Y\|_{p,\sigma},\end{equation}
the last equality is obtained from the fact that $\|\cdot\|_1$ is unitarily invariant. The following lemma will be needed in the sequel. Note also that the inequality part follows directly from the version of Hadamard three lines theorem proved in \cite{beigi2013sandwiched}.

\begin{lemma}\label{lemma:inf} Let $h: S\to L_{\sigma}(\Ha)$ be a bounded continuous function, holomorphic in the interior of $S$. Then for any $\theta\in [0,1]$, we have
\[
\|h(\theta)\|_{1/\theta,\sigma}\le \max\{\sup_t \|h(it)\|_{\infty,\sigma}, \sup_t\|h(1+it)\|_1\}.
\]
Moreover, if equality is attained at some $\theta\in (0,1)$, then it holds for all $\theta\in [0,1]$.

\end{lemma}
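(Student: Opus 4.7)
The plan is to reduce both parts of the lemma to the classical scalar Hadamard three lines theorem by pairing $h$ with an analytic extension of an appropriately chosen dual element. Fix $\theta\in (0,1)$, and using the duality \eqref{eq:duality} pick $Z\in L_\sigma(\Ha)$ with $\|Z\|_{1/(1-\theta),\sigma}=1$ and $\langle Z,h(\theta)\rangle_\sigma=\|h(\theta)\|_{1/\theta,\sigma}$. I would then define $g(z):=f_{Z,\,1/(1-\theta)}(1-z)$, using the extremal function from \eqref{eq:fyp}. By construction $g$ is bounded and continuous on $S$, holomorphic in the interior, satisfies $g(\theta)=Z$, and by \eqref{eq:attain1}--\eqref{eq:attain2} carries the swapped boundary normalizations $\|g(it)\|_{1}=\|g(1+it)\|_{\infty,\sigma}=1$ for all $t\in\mathbb R$.

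With $g$ in hand, the scalar function $F(z):=\langle g(z),h(z)\rangle_\sigma$ is bounded, continuous on $S$, and holomorphic in the interior of $S$. The Hölder-type bound $|\langle\cdot,\cdot\rangle_\sigma|\le\|\cdot\|_{q,\sigma}\|\cdot\|_{p,\sigma}$ (which follows at once from \eqref{eq:duality}) gives $|F(it)|\le\|g(it)\|_{1}\|h(it)\|_{\infty,\sigma}\le M$ and similarly $|F(1+it)|\le M$, where $M$ denotes the right-hand side of the lemma. The classical scalar Hadamard three lines theorem then yields $|F(z)|\le M$ throughout $S$, and evaluating at $z=\theta$ gives $\|h(\theta)\|_{1/\theta,\sigma}=|F(\theta)|\le M$. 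The endpoints $\theta\in\{0,1\}$ follow by continuity of $\theta\mapsto\|h(\theta)\|_{1/\theta,\sigma}$ on $[0,1]$, since $\|\cdot\|_{1/\theta}$ tends to $\|\cdot\|_\infty$ as $\theta\to 0^+$ and to $\|\cdot\|_1$ as $\theta\to 1^-$, while the weights $\sigma^{(\theta-1)/2}$ converge uniformly.

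For the equality clause, suppose $\|h(\theta_0)\|_{1/\theta_0,\sigma}=M$ for some $\theta_0\in(0,1)$. The function $F$ constructed at $\theta_0$ then satisfies $|F(\theta_0)|=M$ together with $|F|\le M$ on $S$, so the maximum modulus principle for bounded holomorphic functions on a strip forces $F$ to be constant of modulus $M$ on $S$. For any other $\theta\in(0,1)$ I would then write $M=|F(\theta)|\le\|g(\theta)\|_{1/(1-\theta),\sigma}\|h(\theta)\|_{1/\theta,\sigma}$ and apply the inequality already established to the reflected function $z\mapsto g(1-z)$, whose two boundary suprema both equal $1$, obtaining $\|g(\theta)\|_{1/(1-\theta),\sigma}\le 1$; combined with the reverse direction this yields $\|h(\theta)\|_{1/\theta,\sigma}=M$, and the endpoints extend once more by continuity. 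The main obstacle I anticipate is building $g$ with simultaneous boundary normalizations on both lines $\{\mathrm{Re}(z)=0\}$ and $\{\mathrm{Re}(z)=1\}$, and this is exactly what the extremal property of $f_{Y,p}$ encoded in \eqref{eq:attain1}--\eqref{eq:attain2} provides, so the whole argument hinges on having those attainment identities available.
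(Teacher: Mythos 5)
Your proof is correct and takes essentially the same route as the paper: your $F(z)=\langle f_{Z,1/(1-\theta)}(1-z),h(z)\rangle_\sigma$ is exactly the auxiliary function the paper calls $H$, the boundary H\"older estimates, the three-lines bound, and the maximum-modulus argument for the equality clause all coincide, and the final step of applying the inequality part to the extremal function to get $\|g(\theta)\|_{1/(1-\theta),\sigma}\le 1$ is the paper's closing chain of inequalities. The only cosmetic difference is that you handle the endpoints $\theta\in\{0,1\}$ by a continuity argument, whereas they are already immediate from the boundary suprema (e.g.\ $\|h(0)\|_{\infty,\sigma}\le\sup_t\|h(it)\|_{\infty,\sigma}$).
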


\begin{proof}
Let $\theta\in (0,1)$. By the duality \eqref{eq:duality}, there is some  $Z$ in the unit ball of $L_{1/(1-\theta),\sigma}(\Ha)$ such that 
$\|h(\theta)\|_{1/\theta,\sigma}=\<Z,h(\theta)\>_\sigma$. Consider the corresponding function $f_{Z,1/(1-\theta)}$, so that $f_{Z,1/(1-\theta)}(1-\theta)=Z$ and by  (\ref{eq:attain1}) and (\ref{eq:attain2}), we have for all $t\in \mathbb R$, 
\[
 \|f_{Z,1/(1-\theta)}(it)\|_{\infty,\sigma}=\|f_{Z,1/(1-\theta)}(1+it)\|_1=\|Z\|_{1/(1-\theta),\sigma}=1.
\]
Put $H(z):=\<f_{Z,1/(1-\theta)}(1-z),h(z)\>_\sigma$, $z\in S$, then  $H:S\to \mathbb C$ is bounded, continuous and holomorphic in the interior of $S$. 
We have for $t\in \mathbb R$,
\[
|H(it)|\le \|f_{Z,1/(1-\theta)}(1-it)\|_1\|h(it)\|_{\infty,\sigma}=\|h(it)\|_{\infty,\sigma}
\]
and 
\[
|H(1+it)|\le \|f_{Z,1/(1-\theta)}(-it)\|_{\infty,\sigma}\|h(1+it)\|_1=\|h(1+it)\|_1.
\]
Let $M:=\max\{\sup_t \|h(it)\|_{\infty,\sigma}, \sup_t\|h(1+it)\|_1\}$, then it follows by the Hadamard three lines theorem that $|H(z)|\le M$ for all $z\in S$. 
The inequality part of the statement now follows by $\|h(\theta)\|_{1/\theta,\sigma}=H(\theta)$.

If equality is attained at $\theta$, then  by the maximum modulus principle,  $H$ must be a constant, so that $H(z)=M$ for all $z\in S$.
For any $\lambda\in [0,1]$ we then have  
\begin{align*}
M&=H(\lambda)\le \|f_{Z,1/(1-\theta)}(1-\lambda)\|_{1/(1-\lambda),\sigma}\|h(\lambda)\|_{1/\lambda,\sigma}\\
&\le \|h(\lambda)\|_{1/\lambda,\sigma}\le M,
\end{align*}
where the last two inequalities follow by the first part of the proof.

\end{proof}

\subsection{Positive trace-preserving maps and their duals}

Let $\Ka$ be a finite dimensional Hilbert space and let $\Phi: B(\Ha)\to B(\Ka)$ be a positive trace-preserving map, then its adjoint $\Phi^*$  is a unital positive map. Let $\sigma\in \states(\Ha)$, let $P\in B(\Ha)$ be 
the projection onto $\supp(\sigma)$ and $Q\in B(\Ka)$ the projection onto $\supp(\Phi(\sigma))$. Then 
\[
0=\Tr \Phi(\sigma)Q^\perp=\Tr \sigma \Phi^*(Q^\perp),
\]
 where $Q^\perp=I-Q$, so that  $\Phi^*(Q^\perp)\le P^\perp$. Let  $X$ be a positive 
element in $L_\sigma(\Ha)$, then
\[
\Tr \Phi(X)Q^\perp=\Tr X\Phi^*(Q^\perp)=0,
\]
so that $\supp(\Phi(X))\subseteq \supp(\Phi(\sigma))$. Since $L_\sigma(\Ha)$ is generated by positive elements, it follows that $\Phi$ maps $L_{\sigma}(\Ha)$ into $L_{\Phi(\sigma)}(\Ka)$. The following result is easily obtained by Remark \ref{rem:beigi} and the proof of Theorem 6 in \cite{beigi2013sandwiched}. Note that as observed in \cite{mhre2015monotonicity}, only  positivity of $\Phi$ is used in this proof.    

\begin{prop}\label{prop:contraction} $\Phi$ defines a contraction $L_{p,\sigma}(\Ha)\to L_{p,\Phi(\sigma)}(\Ka)$  for  all $1\le p\le \infty$.

\end{prop}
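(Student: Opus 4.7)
The plan is to obtain the contraction by complex interpolation between the endpoints $p=1$ and $p=\infty$, using the analytic family $f_{Y,p}$ from (\ref{eq:fyp}) together with Lemma \ref{lemma:inf}. The discussion preceding the statement already shows that $\Phi$ maps $L_{\sigma}(\Ha)$ into $L_{\Phi(\sigma)}(\Ka)$, so the composition $\Phi\circ f_{Y,p}$ makes sense as a function into $L_{\Phi(\sigma)}(\Ka)$, and only the two endpoint contractions need to be secured before invoking Lemma \ref{lemma:inf}.

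At $p=1$, the norms $\|\cdot\|_{1,\sigma}$ and $\|\cdot\|_{1,\Phi(\sigma)}$ coincide with the Schatten trace norm on their respective supports, so the claim reduces to the trace-norm contractivity of a positive trace-preserving map. I would derive this by applying the Russo--Dye theorem to $\Phi^*$: the dual $\Phi^*$ is positive and unital, hence operator-norm contractive, and the bound $\|\Phi(Y)\|_1\le \|Y\|_1$ then follows from the duality $\|\Phi(Y)\|_1=\sup_{\|X\|\le 1}|\Tr X^*\Phi(Y)|=\sup_{\|X\|\le 1}|\Tr \Phi^*(X)^*Y|$.

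At $p=\infty$, I would introduce the auxiliary map
\[
\Phi_\sigma(X):=\Phi(\sigma)^{-1/2}\Phi(\sigma^{1/2}X\sigma^{1/2})\Phi(\sigma)^{-1/2},
\]
interpreted on $\supp(\Phi(\sigma))$. It is a composition of a conjugation, the positive map $\Phi$, and another conjugation, hence positive; and it is unital, since $\Phi_\sigma(I)$ is the support projection of $\Phi(\sigma)$. Russo--Dye for positive unital maps then gives $\|\Phi_\sigma\|=1$, and substituting $X=\sigma^{-1/2}Y\sigma^{-1/2}$ produces $\|\Phi(Y)\|_{\infty,\Phi(\sigma)}=\|\Phi_\sigma(X)\|\le \|X\|=\|Y\|_{\infty,\sigma}$, as required.

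For general $1<p<\infty$ and $Y\in L_{\sigma}(\Ha)$, I would set $g(z):=\Phi(f_{Y,p}(z))$, which is bounded and continuous on $S$, holomorphic in its interior, takes values in $L_{\Phi(\sigma)}(\Ka)$, and satisfies $g(1/p)=\Phi(Y)$. The endpoint contractions combined with (\ref{eq:attain1}) and (\ref{eq:attain2}) give, for every $t\in\mathbb R$,
\[
\|g(it)\|_{\infty,\Phi(\sigma)}\le \|f_{Y,p}(it)\|_{\infty,\sigma}=\|Y\|_{p,\sigma},\qquad \|g(1+it)\|_1\le \|f_{Y,p}(1+it)\|_1=\|Y\|_{p,\sigma},
\]
and Lemma \ref{lemma:inf} applied to $g$ at $\theta=1/p$ yields $\|\Phi(Y)\|_{p,\Phi(\sigma)}\le \|Y\|_{p,\sigma}$. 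The delicate step is the $p=\infty$ endpoint: the Russo--Dye route via the positive unital map $\Phi_\sigma$ is precisely what keeps the argument within the regime of mere positivity, whereas a Kadison--Schwarz-style estimate for $\Phi$ itself would demand 2-positivity. Once both endpoint contractions are in hand, the interpolation is essentially automatic.
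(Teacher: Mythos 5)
Your proof is correct and follows essentially the same route as the paper, which obtains the result from the proof of Theorem 6 in \cite{beigi2013sandwiched} via Remark~\ref{rem:beigi}: interpolation between the $p=1$ endpoint (trace-norm contractivity of a positive trace-preserving map) and the $p=\infty$ endpoint (Russo--Dye applied to the positive unital map $\Gamma_{\Phi(\sigma)}^{-1}\circ\Phi\circ\Gamma_\sigma$), with Lemma~\ref{lemma:inf} playing the role of the three-lines theorem. One cosmetic caveat: the symbol $\Phi_\sigma$ is already reserved in the paper for the Petz recovery map \eqref{eq:petzrecovery}, which is the Hilbert--Schmidt adjoint of the auxiliary map you denote by $\Phi_\sigma$, so that map should be renamed to avoid a clash.
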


Let $\Phi_\sigma$ denote the adjoint of $\Phi$ with respect to $\<\cdot,\cdot\>_\sigma$, that is, $\Phi_\sigma: L_{\Phi(\sigma)}(\Ka)\to L_\sigma(\Ha)$ satisfies 
\begin{equation}\label{eq:adjointsigma}
\<\Phi_\sigma(X),Y\>_{\sigma}=\<X,\Phi(Y)\>_{\Phi(\sigma)},\qquad X\in L_{\Phi(\sigma)}(\Ka), Y\in L_\sigma(\Ha).  
\end{equation}
It is easy to see that
\begin{equation}\label{eq:petzrecovery}
\Phi_\sigma(X)= \sigma^{1/2}\Phi^*(\Phi(\sigma)^{-1/2}X\Phi(\sigma)^{-1/2})\sigma^{1/2}.
\end{equation}
Note that this is precisely the (adjoint of) the dual map of $\Phi^*$ as defined by Petz \cite{petz1988sufficiency}. Note also that $\Phi_\sigma$ is trace-preserving, positive and $\Phi_\sigma(\Phi(\sigma))=\sigma$, so that Proposition \ref{prop:contraction} implies that $\Phi_\sigma$ is  a contraction $L_{p,\Phi(\sigma)}(\Ka)\to L_{p,\sigma}(\Ha)$ for any $p$. Moreover, for any $n\in \mathbb N$,  
$\Phi_\sigma$ is $n$-positive if and only if $\Phi$ is. 

\subsection{Quantum channels and sufficiency}

Let $\Phi:B(\Ha)\to B(\Ka)$ be a quantum channel and let $\sigma\in \states(\Ha)$.
The next result shows that the channel  $\Phi_\sigma$ defines a universal recovery map 
 for $\Phi$ and $\sigma$. For this reason, $\Phi_\sigma$ is sometimes called the Petz recovery map.

\begin{thm}\label{thm:sufficient}\cite{petz1988sufficiency} Let $\Phi:B(\Ha)\to B(\Ka)$ be a quantum channel and let $\rho,\sigma\in \states(\Ha)$ be such that $\supp(\rho)\subseteq \supp(\sigma)$. Then  $\Phi$ is sufficient with respect to $\{\rho,\sigma\}$ if and only if  $\Phi_\sigma\circ \Phi(\rho)=\rho$.
\end{thm}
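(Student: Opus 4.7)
The ``if'' direction is immediate: $\Phi_\sigma$ is itself a quantum channel (completely positive because $\Phi$ is, trace-preserving, and satisfying $\Phi_\sigma\circ\Phi(\sigma)=\sigma$, as recorded after (\ref{eq:petzrecovery})), so once $\Phi_\sigma\circ\Phi(\rho)=\rho$ is granted, $\Phi_\sigma$ plays the role of the recovery map $\Psi$. All the work is in the ``only if'' direction.

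For that direction, suppose $\Psi:B(\Ka)\to B(\Ha)$ is a quantum channel with $\Psi\circ\Phi(\sigma)=\sigma$ and $\Psi\circ\Phi(\rho)=\rho$. The plan is to extract $\Phi_\sigma\circ\Phi(\rho)=\rho$ from the Hilbert-space geometry of $L_{2,\sigma}(\Ha)$ alone, without invoking any deeper structure theory of sufficient channels. Since $\Phi$ and $\Psi$ are positive and trace-preserving with $\Psi\Phi(\sigma)=\sigma$, Proposition~\ref{prop:contraction} supplies contractions $\Phi:L_{2,\sigma}(\Ha)\to L_{2,\Phi(\sigma)}(\Ka)$ and $\Psi:L_{2,\Phi(\sigma)}(\Ka)\to L_{2,\sigma}(\Ha)$. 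Using $\Psi\Phi(\rho)=\rho$ and squeezing,
\[
\|\rho\|_{2,\sigma}=\|\Psi\Phi(\rho)\|_{2,\sigma}\le \|\Phi(\rho)\|_{2,\Phi(\sigma)}\le\|\rho\|_{2,\sigma},
\]
so both inequalities are equalities; in particular, $\|\Phi(\rho)\|_{2,\Phi(\sigma)}=\|\rho\|_{2,\sigma}$.

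I then exploit the fact that $\Phi_\sigma$ is precisely the adjoint of $\Phi$ with respect to the pairing $\<\cdot,\cdot\>_\sigma$, via (\ref{eq:adjointsigma}), together with the standard Hilbert-space fact that a contraction $T$ satisfying $\|Tv\|=\|v\|$ must obey $T^\dagger T v=v$. Concretely, using $(\Phi(\rho))^*=\Phi(\rho)$ and $\rho^*=\rho$, the adjoint relation gives
\[
\<\Phi_\sigma\Phi(\rho),\rho\>_\sigma=\<\Phi(\rho),\Phi(\rho)\>_{\Phi(\sigma)}=\|\Phi(\rho)\|_{2,\Phi(\sigma)}^2=\|\rho\|_{2,\sigma}^2.
\]
Combining with $\|\Phi_\sigma\Phi(\rho)\|_{2,\sigma}\le \|\Phi(\rho)\|_{2,\Phi(\sigma)}=\|\rho\|_{2,\sigma}$ (Proposition~\ref{prop:contraction} applied to $\Phi_\sigma$), the expansion
\[
\|\rho-\Phi_\sigma\Phi(\rho)\|_{2,\sigma}^2=\|\rho\|_{2,\sigma}^2-2\<\Phi_\sigma\Phi(\rho),\rho\>_\sigma+\|\Phi_\sigma\Phi(\rho)\|_{2,\sigma}^2
\]
is bounded above by $\|\rho\|_{2,\sigma}^2-2\|\rho\|_{2,\sigma}^2+\|\rho\|_{2,\sigma}^2=0$, forcing $\Phi_\sigma\circ\Phi(\rho)=\rho$.

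The only point requiring mild care is the conjugation convention for the Hilbert-space inner product $(Y_1,Y_2)\mapsto \<Y_1^*,Y_2\>_\sigma$ on $L_{2,\sigma}(\Ha)$; but since $\rho$, $\Phi(\rho)$ and $\Phi_\sigma\Phi(\rho)$ are all self-adjoint (positivity of $\Phi$ and $\Phi_\sigma$), the sesquilinear and bilinear pairings coincide on the relevant pairs, and the scalar $\<\Phi_\sigma\Phi(\rho),\rho\>_\sigma$ is real. I do not anticipate a serious obstacle: the whole argument is a direct consequence of Proposition~\ref{prop:contraction} at $p=2$ together with the interpretation of $\Phi_\sigma$ as the Hilbert-space adjoint of $\Phi$ on the self-adjoint part of $L_{2,\sigma}(\Ha)$.
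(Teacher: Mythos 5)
Your argument is correct, but note that the paper itself offers no proof of this statement: it is imported wholesale from \cite{petz1988sufficiency}, where Petz works in the von Neumann algebra setting with relative modular operators and integral representations of operator convex functions. What you have produced is a genuinely different, self-contained derivation built entirely from the paper's own $L_p$ machinery: the ``only if'' direction amounts to observing that a recovery map forces $\|\Phi(\rho)\|_{2,\Phi(\sigma)}=\|\rho\|_{2,\sigma}$ (by sandwiching between the two contractions of Proposition~\ref{prop:contraction}), and then running exactly the Hilbert-space argument of Lemma~\ref{lemma:banff2} --- your explicit expansion of $\|\rho-\Phi_\sigma\Phi(\rho)\|_{2,\sigma}^2$ is just the equality case of Cauchy--Schwarz that the lemma invokes. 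There is no circularity, since Lemma~\ref{lemma:banff2} depends only on Proposition~\ref{prop:contraction} and the adjoint relation (\ref{eq:adjointsigma}), not on Theorem~\ref{thm:sufficient}. Your route buys an elementary finite-dimensional proof whose ``only if'' direction needs only positivity (not complete positivity) of $\Phi$ and $\Psi$; complete positivity (or 2-positivity, as in the concluding remarks) enters solely in the ``if'' direction, where $\Phi_\sigma$ must be a legitimate channel. What Petz's original argument buys instead is generality: it works for normal states on arbitrary von Neumann algebras, where the finite-dimensional identification $L_{2,\sigma}(\Ha)\cong B(\supp(\sigma))$ and the clean existence of $\sigma^{-1/2}$ are no longer available. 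One cosmetic point: the scalar $\<\Phi_\sigma\Phi(\rho),\rho\>_\sigma$ equals $\|\Phi(\rho)\|_{2,\Phi(\sigma)}^2\ge 0$, so your remark about the conjugation convention, while harmless, is already settled by that identity.
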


By restriction to $\supp(\sigma)$, we may assume that $\sigma$ is invertible. 
In this case, $\Omega:=\Phi_\sigma\circ\Phi$ is a channel on $B(\Ha)$ and  all states 
$\rho\in \states(\Ha)$ such that $\Phi$ is sufficient with respect to $\{\rho,\sigma\}$ are precisely all  invariant states  of  $\Omega$. Note that  by \eqref{eq:petzrecovery}, $\sigma$ is an invertible 
 invariant state of $\Omega$. The following results are known.

\begin{thm}\label{thm:factorization} Let $\Omega:B(\Ha)\to B(\Ha)$ be a channel such that $\Omega(\sigma)=\sigma$ for some invertible $\sigma\in \states(\Ha)$. Then there are Hilbert spaces $\Ha_n^L$, $\Ha_n^R$ and  a unitary operator $U: \Ha \to \bigoplus_n \Ha_n^L\otimes \Ha_n^R$,  such that
\begin{enumerate} 
\item[(i)] the set $\Fe_{\Omega^*}$ of fixed points of $\Omega^*$ has the form
\[
\Fe_{\Omega^*}=U^*\left(\bigoplus_n B(\Ha_n^L)\otimes I_{\Ha_n^R}\right) U.
\] 
\item[(ii)] the set $\Fe_\Omega$ of fixed points of $\Omega$ has the form
\[
\Fe_{\Omega}=U^*\left(\bigoplus_n B(\Ha_n^L)\otimes \sigma_n^R\right) U,
\]
for some invertible $\sigma_n^R\in \states(\Ha_n^R)$. 

\end{enumerate}
\end{thm}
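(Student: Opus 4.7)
The proof splits naturally along the two conclusions. The plan for (i) is to identify $\Ne := \Fe_{\Omega^*}$ as a unital *-subalgebra of $B(\Ha)$ and then invoke the classical block-decomposition theorem for finite-dimensional *-subalgebras to produce $U$. The plan for (ii) is to build a $\sigma$-preserving conditional expectation $E: B(\Ha)\to\Ne$ by ergodic averaging of $\Omega^*$, use its existence to obtain a factorization of $\sigma$ along the blocks of $\Ne$, and finally read off $\Fe_\Omega$ as the range of the Hilbert-Schmidt adjoint $E^*$.

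For part (i), since $\Omega$ is a channel, $\Omega^*$ is unital and $2$-positive, so the Kadison-Schwarz inequality gives $\Omega^*(X^*X)\ge \Omega^*(X)^*\Omega^*(X)$. For $X\in\Ne$ the right-hand side equals $X^*X$, and pairing with $\sigma$ using $\Omega(\sigma)=\sigma$ yields
\[
\Tr\sigma\bigl(\Omega^*(X^*X)-X^*X\bigr)=\Tr\Omega(\sigma)X^*X-\Tr\sigma X^*X=0.
\]
Invertibility of $\sigma$ forces the non-negative operator $\Omega^*(X^*X)-X^*X$ to vanish, so each $X\in\Ne$ lies in the multiplicative domain of $\Omega^*$. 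The bimodule property of the multiplicative domain gives $\Omega^*(XY)=\Omega^*(X)\Omega^*(Y)=XY$ for $X,Y\in\Ne$, so $\Ne$ is closed under products; $*$-closure is automatic since $\Omega^*$ preserves adjoints. Thus $\Ne$ is a unital *-subalgebra, and the structure theorem supplies the required $U$.

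For part (ii), set
\[
E:=\lim_{N\to\infty}\frac{1}{N}\sum_{k=0}^{N-1}(\Omega^*)^k,
\]
which converges in finite dimensions to a unital CP idempotent with range $\Ne$; by Tomiyama's theorem $E$ is a conditional expectation. Invariance of $\sigma$ under $\Omega$ gives $\Tr\sigma(\Omega^*)^k(X)=\Tr\sigma X$, so $E$ preserves $\sigma$. Existence of a $\sigma$-preserving conditional expectation onto $\Ne$ forces, by Takesaki's theorem, the decomposition
\[
U\sigma U^*=\bigoplus_n \tilde\sigma_n^L\otimes\sigma_n^R
\]
with $\tilde\sigma_n^L\in B(\Ha_n^L)^+$ and states $\sigma_n^R\in\states(\Ha_n^R)$, both invertible because $\sigma$ is (invertibility of a tensor factor propagates from that of the product). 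In this presentation $E$ acts on block $n$ as $X_n\mapsto(\mathrm{id}_L\otimes\omega_n^R)(X_n)\otimes I_{\Ha_n^R}$ with $\omega_n^R=\Tr(\sigma_n^R\,\cdot\,)$, so a direct computation gives $E^*(Z_n)=\mathrm{Tr}_{\Ha_n^R}(Z_n)\otimes\sigma_n^R$. Since $E^*=\lim_N N^{-1}\sum_{k<N}\Omega^k$ is the projection onto $\Fe_\Omega$, its range $U^*\bigl(\bigoplus_n B(\Ha_n^L)\otimes\sigma_n^R\bigr)U$ is exactly $\Fe_\Omega$, proving (ii).

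The delicate step is the Takesaki factorization of $\sigma$ relative to the blocks of $\Ne$. A self-contained alternative would derive it directly from $E$: using that $E$ has range $\Ne$ and that $\Ne$ contains the central projections $P_n$, one sees that $E$ kills off-diagonal blocks, whence $\sigma$-preservation forces $\sigma$ to be block-diagonal; on each block the tensor factorization of $P_n\sigma P_n$ then follows by applying $\sigma$-preservation of $E$ to operators of the form $e_ie_j^*\otimes Y$ and comparing matrix entries. Everything else in the plan is essentially routine linear algebra once the factorization is in hand.
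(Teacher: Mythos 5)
Your proof is correct and shares the paper's overall architecture: Ces\`aro averaging of $\Omega^*$ to produce a conditional expectation $E$ onto $\Fe_{\Omega^*}$, the identity $\Fe_\Omega=\Fe_{E^*}=\mathrm{ran}(E^*)$, and the standard block structure of a finite-dimensional $*$-subalgebra. Two steps differ. First, you prove from scratch that $\Fe_{\Omega^*}$ is a $*$-subalgebra, via Kadison--Schwarz, faithfulness of the invariant state $\sigma$, and the multiplicative domain; the paper simply cites \cite[Example 9.4]{petz2008quantum} for this, so your argument makes that part self-contained. Second, and more substantially, you obtain the tensor factorization by first invoking Takesaki's theorem to factor $\sigma$ along the blocks of $\Ne$ and then identifying $E$ with the slice map; the paper goes the other way, deriving the explicit form of $E$ directly from the bimodule property \eqref{eq:condexp} (off-diagonal blocks are annihilated, and on each block $E(U^*(I\otimes Y_n)U)$ commutes with $U^*(B(\Ha_n^L)\otimes I)U$, hence is a scalar multiple of the central projection), after which $E^*$ and the factorization of $\sigma=E^*(\sigma)$ are read off. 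Your route buys brevity at the cost of importing Takesaki's theorem and, implicitly, the uniqueness of the $\sigma$-preserving conditional expectation onto $\Ne$ --- needed to conclude that $E$ \emph{is} the slice map $\mathrm{id}\otimes\Tr(\sigma_n^R\,\cdot\,)$ rather than merely some expectation onto $\Ne$; that uniqueness is standard for faithful $\sigma$ but should be stated. The ``self-contained alternative'' you sketch in your last paragraph is essentially the paper's own argument.
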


\begin{proof} 
Part (i)  is quite standard. It is proved e.g. in  \cite[Example 9.4]{petz2008quantum}
that $\Fe_{\Omega^*}$ is a subalgebra and the sequence $\frac1n\sum_{k=0}^{n-1}(\Omega^*)^k$, where $(\Omega^*)^k$ denotes the $k$-fold composition $\Omega^*\circ\dots\circ\Omega^*$, converges to a conditional expectation
$E$ onto $\Fe_{\Omega^*}$.   Since any subalgebra in $B(\Ha)$ is isomorphic to a direct sum of full matrix algebras, it must have the stated form.

The proof of part (ii) can be found in \cite{wolf2012quantum}. We give another proof of (ii) using some arguments of \cite{hjpw2004ssa}, which is perhaps more straightforward.

Note that the conditional expectation $E$ satisfies $E^*(\sigma)=\sigma$ and $\Omega^*\circ E=E\circ\Omega^*=E$. 
Let now $Y \in \Fe_\Omega$. Then $\Omega^k(Y)=Y$ for any $k\in \mathbb N$, so that 
$E^*(Y)=Y$. Conversely, let $E^*(Y)=Y$, then 
\[
\Omega(Y)=\Omega(E^*(Y))=(E\circ \Omega^*)^*(Y)=E^*(Y)=Y. 
\]
We have proved that $\Fe_\Omega=\Fe_{E^*}$. 

Let $P_n:\Ha\to \Ha^L_n\otimes \Ha^R_n$ be orthogonal projections, then $Q_n:=U^*P_nU$ are central projections in $\Fe_{\Omega^*}=\Fe_E$, the range of $E$. Therefore, we must have for all $X\in B(\Ha)$,
\begin{align*}
E(X)&=\sum_{m,n} E(Q_mXQ_n)=\sum_{m,n} Q_mE(Q_mXQ_n)Q_n\\
&=\sum_{m,n} Q_nE(Q_mXQ_n)Q_n=\sum_n Q_nE(Q_nXQ_n)Q_n,
\end{align*}
where we used the property \eqref{eq:condexp} in the second and in the last equality. 
Further, by the same property, we have for any $X_n\in B(\Ha^L_n)$, $Y_n\in B(\Ha^R_n)$
\begin{align*}
E\left(U^*(X_n\otimes Y_n)U\right)&=U^*(X_n\otimes I_{\Ha_n^R})UE(U^*(I_{\Ha_n^L}\otimes Y_n)U)\\
&=E(U^*(I_{\Ha^L_n}\otimes Y_n)U)U^*(X_n\otimes I_{\Ha^R_n})U
\end{align*}
so that $E(U^*(I_{\Ha^L_n}\otimes Y_n)U)$ lies in the center of $U^*(B(\Ha^L_n)\otimes I_{\Ha^R_n})U$ and therefore is a multiple of $Q_n$. It follows that there is some linear functional $\psi_n$ on $B(\Ha_n^R)$ such that 
\[
E(U^*(I_{\Ha^L_n}\otimes Y_n)U)=\psi_n(Y_n)Q_n.
\]
Since $E$ is positive and unital, $\psi_n$ must be such as well and there must be some  $\sigma^R_n\in \states(\Ha^R_n)$, such that $\psi_n(Y_n)=\Tr [\sigma^R_nY_n]$, $Y_n\in B(\Ha_n^R)$. Let us denote 
$\phi_{\sigma_n^R}: Y_n\mapsto \Tr[\sigma_n^RY_n] I_{\Ha_n^R}$, $Y_n\in B(\Ha_n^R)$.  
It follows that
\[
E=U^*\left(\sum_n (id_{B(\Ha^L_n)}\otimes \phi_{\sigma^R_n})(P_nU\cdot U^*P_n)\right)U
\]
and
\[
E^*=U^*\left(\sum_n (id_{B(\Ha^L_n)}\otimes \phi^*_{\sigma^R_n})(P_nU\cdot U^*P_n)\right)U.
\]
It is now clear that $\Fe_\Omega=\Fe_{E^*}$ has the form as in (ii), with $\sigma_n^R$ as above. Since $\sigma\in \Fe_\Omega$ is invertible, $\sigma_n^R$ must be invertible as well, for all $n$.  

\end{proof}

As a consequence, we obtain a following characterization of sufficient channels. 
Similar results were already obtained in \cite{mope2004sufficient} and \cite{jepe2006sufficiency}. 

\begin{coro}\label{coro:factorization}
Let $\Phi : B(\Ha)\to B(\Ka)$ be a channel and let $\sigma\in \states(\Ha)$ be invertible. Then there is a unitary $U$ and factorizations
\[
U\Ha=\bigoplus_n \Ha_n^L\otimes \Ha_n^R,\qquad U\sigma U^*=\bigoplus_n A_n^L\otimes \sigma_n^R,
\]
where $A_n^L\in B(\Ha_n^L)^+$ and $\sigma_n^R\in \states(\Ha_n^R)$ are invertible, such that for any $\rho\in \states(\Ha)$, $\Phi$ is sufficient with respect to $\{\rho,\sigma\}$ if and only if
\[
U\rho U^*=\bigoplus_n B_n^L\otimes \sigma_n^R
\]
for some $B_n^L\in B(\Ha_n^L)^+$.

\end{coro}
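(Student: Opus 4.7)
The proposal is to derive the corollary as a direct consequence of Theorem \ref{thm:sufficient} and Theorem \ref{thm:factorization} applied to the composite channel $\Omega := \Phi_\sigma \circ \Phi$ on $B(\Ha)$. Since $\Phi_\sigma(\Phi(\sigma))=\sigma$ by the remark after \eqref{eq:petzrecovery}, $\sigma$ is an invariant state of $\Omega$, and by assumption it is invertible. Hence the hypotheses of Theorem \ref{thm:factorization} are met, and we obtain a unitary $U:\Ha\to \bigoplus_n \Ha_n^L\otimes \Ha_n^R$ together with invertible $\sigma_n^R\in\states(\Ha_n^R)$ satisfying parts (i) and (ii).

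Next, I would translate the sufficiency condition into a fixed-point condition for $\Omega$. Theorem \ref{thm:sufficient} says that $\Phi$ is sufficient with respect to $\{\rho,\sigma\}$ if and only if $\Phi_\sigma(\Phi(\rho))=\rho$, i.e., $\Omega(\rho)=\rho$, which means $\rho\in \Fe_\Omega$. By part (ii) of Theorem \ref{thm:factorization}, $\Fe_\Omega=U^*\bigl(\bigoplus_n B(\Ha_n^L)\otimes \sigma_n^R\bigr)U$, so this happens if and only if
\[
U\rho U^* \;=\; \bigoplus_n B_n^L\otimes \sigma_n^R
\]
for some operators $B_n^L\in B(\Ha_n^L)$. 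Since $\rho\ge 0$ and the $\sigma_n^R$ are positive and invertible, the factors $B_n^L$ must be positive, giving the stated form.

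It then remains to produce the claimed decomposition of $\sigma$ itself with the same blocks $\sigma_n^R$. But $\sigma$ is a fixed point of $\Omega$, so applying the previous step to $\rho=\sigma$ yields $U\sigma U^* = \bigoplus_n A_n^L\otimes \sigma_n^R$ with $A_n^L\in B(\Ha_n^L)^+$. Invertibility of $A_n^L$ follows at once: $\sigma$ is invertible, so each block $A_n^L\otimes \sigma_n^R$ must be invertible, and since $\sigma_n^R$ is invertible this forces $A_n^L$ to be invertible.

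I expect no serious obstacles here: the whole argument is a bookkeeping exercise once one recognizes that sufficiency is precisely membership in $\Fe_\Omega$ and invokes the structure theorem. The only point that needs a line of care is ensuring that the blocks $\sigma_n^R$ arising in the decomposition of $\sigma$ are literally the same ones fixed by the structural result for $\Fe_\Omega$, which is automatic because the decomposition of $\Fe_\Omega$ in Theorem \ref{thm:factorization}(ii) is given by a single unitary $U$ and a single family $\{\sigma_n^R\}$ that applies uniformly to every fixed point, including $\sigma$.
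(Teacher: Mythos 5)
Your proof is correct and follows exactly the paper's route: the paper's own proof is the one-line observation that the corollary follows from Theorem \ref{thm:sufficient} and Theorem \ref{thm:factorization} applied to $\Omega=\Phi_\sigma\circ\Phi$, and you have simply filled in the same bookkeeping (sufficiency $\Leftrightarrow$ $\rho\in\Fe_\Omega$, the form of $\Fe_\Omega$, and the positivity/invertibility of the blocks).
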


\begin{proof} Follows by Theorem \ref{thm:sufficient} and Theorem \ref{thm:factorization} applied to the channel $\Omega=\Phi_\sigma\circ\Phi$.

\end{proof}

\section{The main result}

Let $\rho,\sigma\in \states(\Ha)$. It is easy to see that for $\alpha>1$ we may define the sandwiched R\'enyi relative  entropies by
\[
\tilde D_\alpha(\rho\|\sigma)=\left\{\begin{array}{cc}\frac{\alpha}{\alpha-1}\log \|\rho\|_{\alpha,\sigma}& \ \mbox{if } \rho\in L_{\alpha, \sigma}(\Ha)\\
 \ & \\
\infty & \mbox{otherwise}
\end{array}\right.
\]
Similarly as in \cite{beigi2013sandwiched,mhre2015monotonicity}, the data processing inequality for $\tilde D_\alpha$, $\alpha> 1$ and  for any positive trace-preserving map $\Phi$ is  an  obvious consequence of  Proposition \ref{prop:contraction}.  
Assume now that $\Phi$ is a channel. It is clear that if $\Phi$ is  sufficient with respect to $\{\rho,\sigma\}$, then equality 
in DPI must be attained. We will show that the converse is also true. 

\begin{thm}\label{thm:main} Let $\Phi:B(\Ha)\to B(\Ka)$ be a channel and let $1<\alpha<\infty$.  Let $\rho,\sigma\in \states(\Ha)$ be such that $\supp(\rho)\subseteq \supp(\sigma)$.   
 Then $\Phi$ is sufficient with respect to $\{\rho,\sigma\}$ if and only if 
\[
\tilde D_\alpha(\Phi(\rho)\|\Phi(\sigma))= \tilde D_\alpha(\rho\|\sigma).
\]

\end{thm}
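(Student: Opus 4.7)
The plan is first to dispatch the trivial direction: if $\Phi_\sigma\circ\Phi(\rho)=\rho$ then DPI applied in turn to $\Phi$ and to $\Phi_\sigma$ forces equality. For the converse, Theorem~\ref{thm:sufficient} reduces the task to proving $\Phi_\sigma\circ\Phi(\rho)=\rho$; by restriction to $\supp(\sigma)$ I may assume $\sigma$ invertible. Set $M:=\|\rho\|_{\alpha,\sigma}$ and $\alpha':=\alpha/(\alpha-1)$.

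The central tool will be Lemma~\ref{lemma:inf} applied to $h(z):=\Phi(f_{\rho,\alpha}(z))\in L_{\Phi(\sigma)}(\Ka)$. Proposition~\ref{prop:contraction} together with (\ref{eq:attain1})--(\ref{eq:attain2}) yields $\|h(it)\|_{\infty,\Phi(\sigma)}\le M$ and $\|h(1+it)\|_1\le M$, while the equality hypothesis provides $\|h(1/\alpha)\|_{\alpha,\Phi(\sigma)}=M$. Equality is thus attained at $\theta=1/\alpha$ in Lemma~\ref{lemma:inf}, and inspecting its proof produces $Z\in L_{\alpha',\Phi(\sigma)}(\Ka)$ with $\|Z\|_{\alpha',\Phi(\sigma)}=1$ and $\langle Z,\Phi(\rho)\rangle_{\Phi(\sigma)}=M$ such that
\[
H(z)=\langle f_{Z,\alpha'}(1-z),h(z)\rangle_{\Phi(\sigma)}=\langle \Phi_\sigma(f_{Z,\alpha'}(1-z)),f_{\rho,\alpha}(z)\rangle_\sigma
\]
is identically $M$ on $S$.

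The crux will be a rigidity argument. For real $\theta\in(0,1)$, both $f_{Z,\alpha'}$ and $f_{\rho,\alpha}$ saturate Lemma~\ref{lemma:inf} for themselves, so $\|f_{Z,\alpha'}(1-\theta)\|_{1/(1-\theta),\Phi(\sigma)}=1$ and $\|f_{\rho,\alpha}(\theta)\|_{1/\theta,\sigma}=M$. Combining the duality (\ref{eq:duality}) with Proposition~\ref{prop:contraction} applied to $\Phi_\sigma$, the constancy $H(\theta)=M$ collapses to $\|\Phi_\sigma(f_{Z,\alpha'}(1-\theta))\|_{1/(1-\theta),\sigma}=1$ together with saturation of (\ref{eq:duality}) against $f_{\rho,\alpha}(\theta)/M$. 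Strict convexity of $L_{1/(1-\theta),\sigma}$ for $\theta\in(0,1)$, inherited from strict convexity of the Schatten $p$-norm via (\ref{eq:pnorm}), makes this dual unique, and a direct computation from (\ref{eq:fyp}) identifies it as $f_{Z_\rho,\alpha'}(1-\theta)$, where $Z_\rho:=M^{1-\alpha}\sigma^{1/(2\alpha)}X^{\alpha-1}\sigma^{1/(2\alpha)}$ and $X:=\sigma^{(1-\alpha)/(2\alpha)}\rho\sigma^{(1-\alpha)/(2\alpha)}$. Hence $\Phi_\sigma(f_{Z,\alpha'}(1-\theta))=f_{Z_\rho,\alpha'}(1-\theta)$ for real $\theta\in(0,1)$, and by analytic continuation for all $z\in S$.

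To finish, I would evaluate the identity at $z=1/\alpha'$ (so $1-z=1/\alpha$). A short computation from (\ref{eq:fyp}) gives $f_{Z,\alpha'}(1/\alpha)=M^{-1}\Phi(\rho)$ and $f_{Z_\rho,\alpha'}(1/\alpha)=M^{-1}\rho$, so the operator identity reads $\Phi_\sigma\circ\Phi(\rho)=\rho$, and Theorem~\ref{thm:sufficient} concludes. The main obstacle I expect is this rigidity step: passing from constancy of $H$ to the pointwise operator equation demands a careful use of strict convexity and correct tracking of the polar decomposition conventions underlying (\ref{eq:fyp}), with the analytic continuation from real $\theta$ to the strip $S$ furnishing an additional point that must be verified.
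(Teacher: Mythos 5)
Your argument is correct, but it takes a genuinely different route from the paper's. The paper applies Lemma~\ref{lemma:isometry} only at the single point $\theta=1/2$, so that the problem drops into the Hilbert space $L_{2,\sigma}(\Ha)$, where the equality case of the Cauchy--Schwarz inequality (Lemma~\ref{lemma:banff2}) yields $\Phi_\sigma\circ\Phi(\tau)=\tau$ for the auxiliary state $\tau$ proportional to $\sigma^{1/4}X^{\alpha/2}\sigma^{1/4}$; sufficiency for $\{\rho,\sigma\}$ is then deduced from sufficiency for $\{\tau,\sigma\}$ via the fixed-point factorization of Theorem~\ref{thm:factorization} and Corollary~\ref{coro:factorization}, which is the one place where complete (or $2$-) positivity is used. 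You instead stay entirely inside the interpolation machinery: constancy of $H$, strict (uniform) convexity of the Schatten norms --- hence uniqueness of the norming functional in $L_{1/(1-\theta),\sigma}(\Ha)$ --- and the explicit formula for that functional identify $\Phi_\sigma(f_{Z,\alpha'}(1-\theta))$ with $f_{Z_\rho,\alpha'}(1-\theta)$, and evaluation at $\theta=1/\alpha'$ gives $\Phi_\sigma\circ\Phi(\rho)=\rho$ directly. This is essentially the alternative sketched in Section~\ref{sec:conclusion} and attributed there to unpublished notes: it bypasses the factorization theorem entirely and establishes the recovery identity assuming only positivity of $\Phi$ (complete positivity then enters only through Theorem~\ref{thm:sufficient} to translate that identity into sufficiency), at the price of a more delicate rigidity argument. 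Your computations check out: $Z_\rho$ is indeed the canonical norming functional of $\rho$, $f_{Z_\rho,\alpha'}(1-\theta)$ is the canonical norming functional of $f_{\rho,\alpha}(\theta)$, and $f_{Z_\rho,\alpha'}(1/\alpha)=M^{-1}\rho$, with the analogous statements for $\Phi(\rho)$ provided $Z$ is taken to be the canonical (by uniqueness, the only) norming functional of $\Phi(\rho)$. Two small points: the final analytic continuation is superfluous, since the evaluation point $1/\alpha'$ already lies in the real interval $(0,1)$ on which the operator identity was established; and the norming-functional identification needs $H\equiv M>0$ rather than merely $|H|\equiv M$, which does hold here because $H(1/\alpha)=\<Z,\Phi(\rho)\>_{\Phi(\sigma)}=M$.
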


The proof is based on the following lemmas. We assume below that $\rho,\sigma\in \states(\Ha)$ and $\supp(\rho)\subseteq \supp(\sigma)$.

\begin{lemma}\label{lemma:banff2} Let $\Phi: B(\Ha)\to B(\Ka)$ be a positive trace-preserving map. Then $\tilde D_2(\Phi(\rho)\|\Phi(\sigma))= \tilde D_2(\rho\|\sigma)$ if and only if $\Phi_\sigma\circ\Phi(\rho)=\rho$.

\end{lemma}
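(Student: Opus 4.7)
The plan is to translate the equality $\tilde D_2(\Phi(\rho)\|\Phi(\sigma)) = \tilde D_2(\rho\|\sigma)$ into a Hilbert-space statement in $L_{2,\sigma}(\Ha)$ and then exploit Cauchy--Schwarz together with the contraction property of $\Phi_\sigma$ from Proposition \ref{prop:contraction}.

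First I would unwind the definition: since $\tilde D_2(\rho\|\sigma) = 2\log \|\rho\|_{2,\sigma}$, the hypothesis is equivalent to $\|\Phi(\rho)\|_{2,\Phi(\sigma)} = \|\rho\|_{2,\sigma}$. Because $\rho$ (and $\Phi(\rho)$) is self-adjoint, the Hilbert-space identity gives $\|\rho\|_{2,\sigma}^2 = \<\rho,\rho\>_\sigma$ and similarly $\|\Phi(\rho)\|_{2,\Phi(\sigma)}^2 = \<\Phi(\rho),\Phi(\rho)\>_{\Phi(\sigma)}$. Set $\rho' := \Phi_\sigma\circ \Phi(\rho)$, which lies in $L_\sigma(\Ha)$ and is positive. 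By the defining property \eqref{eq:adjointsigma} of $\Phi_\sigma$,
\[
\<\rho',\rho\>_\sigma = \<\Phi_\sigma\Phi(\rho),\rho\>_\sigma = \<\Phi(\rho),\Phi(\rho)\>_{\Phi(\sigma)} = \|\Phi(\rho)\|_{2,\Phi(\sigma)}^2 = \|\rho\|_{2,\sigma}^2.
\]

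Next I would combine two bounds. On the one hand, Proposition \ref{prop:contraction} applied to $\Phi_\sigma$ (which is positive and trace-preserving) yields $\|\rho'\|_{2,\sigma} \le \|\Phi(\rho)\|_{2,\Phi(\sigma)} = \|\rho\|_{2,\sigma}$. On the other hand, Cauchy--Schwarz in the Hilbert space $L_{2,\sigma}(\Ha)$ applied to the self-adjoint elements $\rho'$ and $\rho$ gives $\<\rho',\rho\>_\sigma \le \|\rho'\|_{2,\sigma}\|\rho\|_{2,\sigma}$. Chaining these with the equation above, every inequality is saturated: Cauchy--Schwarz forces $\rho' = c\rho$ for some scalar $c$, and the two norm bounds force $c=1$ (positivity of the inner product picks the positive sign). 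Hence $\Phi_\sigma\circ \Phi(\rho) = \rho$.

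For the converse, if $\Phi_\sigma\circ\Phi(\rho)=\rho$, then reading the same chain backwards,
\[
\|\rho\|_{2,\sigma}^2 = \<\rho,\rho\>_\sigma = \<\Phi_\sigma\Phi(\rho),\rho\>_\sigma = \<\Phi(\rho),\Phi(\rho)\>_{\Phi(\sigma)} = \|\Phi(\rho)\|_{2,\Phi(\sigma)}^2,
\]
which gives equality in DPI. There is no real obstacle here beyond keeping track of the identification $L_{p,\sigma}(\Ha) = L_\sigma(\Ha)$ and verifying that $\rho'$ is indeed self-adjoint (it is, as the image of a positive operator under the positive map $\Phi_\sigma$); the content of the lemma is essentially that in the $\alpha=2$ case the sandwiched R\'enyi divergence is literally a squared Hilbert-space norm, so equality in DPI reduces to saturation of Cauchy--Schwarz for the pair $(\Phi_\sigma\Phi(\rho),\rho)$.
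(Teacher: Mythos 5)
Your argument is correct and is essentially the paper's own proof: both translate the $\alpha=2$ equality into $\|\Phi(\rho)\|_{2,\Phi(\sigma)}=\|\rho\|_{2,\sigma}$, use the adjoint relation \eqref{eq:adjointsigma} to write $\<\Phi_\sigma\circ\Phi(\rho),\rho\>_\sigma=\|\Phi(\rho)\|_{2,\Phi(\sigma)}^2$, and then invoke the contraction property of $\Phi_\sigma$ together with saturation of the Schwarz inequality to conclude $\Phi_\sigma\circ\Phi(\rho)=\rho$. You merely spell out the intermediate steps (the scalar $c$ and the trivial converse direction) that the paper leaves implicit.
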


\begin{proof} Assume that the first equality  holds.
Then since $L_{2,\sigma}(\Ha)$ and $L_{2,\Phi(\sigma)}(\Ka)$ are Hilbert spaces and $\Phi_\sigma\circ \Phi$ is a contraction,
\[
\|\rho\|_{2,\sigma}^2=\|\Phi(\rho)\|_{2,\Phi(\sigma)}^2=\<\Phi(\rho),\Phi(\rho)\>_{\Phi(\sigma)}=\<\Phi_\sigma\circ \Phi(\rho),\rho\>_\sigma\le \|\rho\|_{2,\sigma}^2. 
\]
By the equality condition in Schwarz inequality, it follows that we must have $\Phi_\sigma\circ \Phi(\rho)=\rho$.

\end{proof}

\begin{lemma}\label{lemma:isometry} Let $\Phi:B(\Ha)\to B(\Ka)$ be a positive trace-preserving map and let $p>1$. Let $Y\in L_\sigma(\Ha)$ be such that $\|\Phi(Y)\|_{p,\Phi(\sigma)}=\|Y\|_{p,\sigma}$ and let $f_{Y,p}$ be as in \eqref{eq:fyp}.  
Then
\[
\|\Phi(f_{Y,p}(\theta))\|_{1/\theta,\Phi(\sigma)}=\|f_{Y,p}(\theta)\|_{1/\theta,\sigma},\qquad \forall\ \theta\in (0,1).
\]
\end{lemma}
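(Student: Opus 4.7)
The plan is to apply Lemma~\ref{lemma:inf} twice: once to the function $f_{Y,p}$ itself and once to its composition with $\Phi$. The key observation is that $\Phi$ extends holomorphically to the strip through its action on $f_{Y,p}$, and combined with the contractivity on the boundary this lets the equality at the interior point $\theta=1/p$ propagate to all $\theta\in[0,1]$.

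First I would apply Lemma~\ref{lemma:inf} directly to $h_0:=f_{Y,p}$, which is a bounded continuous function $S\to L_\sigma(\Ha)$, holomorphic in the interior, with $h_0(1/p)=Y$. By equations \eqref{eq:attain1} and \eqref{eq:attain2}, both $\sup_t\|h_0(it)\|_{\infty,\sigma}$ and $\sup_t\|h_0(1+it)\|_1$ equal $\|Y\|_{p,\sigma}$, so the Hadamard bound in Lemma~\ref{lemma:inf} gives $\|h_0(\theta)\|_{1/\theta,\sigma}\le \|Y\|_{p,\sigma}$ for every $\theta\in[0,1]$. At $\theta=1/p$ we have exact equality (trivially, $\|h_0(1/p)\|_{p,\sigma}=\|Y\|_{p,\sigma}$), so by the equality clause of Lemma~\ref{lemma:inf},
\[
\|f_{Y,p}(\theta)\|_{1/\theta,\sigma}=\|Y\|_{p,\sigma}\qquad\text{for all }\theta\in[0,1].
\]

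Next I would apply the same argument to $h(z):=\Phi(f_{Y,p}(z))$, which is a bounded continuous function $S\to L_{\Phi(\sigma)}(\Ka)$, holomorphic in the interior, with $h(1/p)=\Phi(Y)$. By Proposition~\ref{prop:contraction}, $\Phi$ is a contraction $L_{\infty,\sigma}(\Ha)\to L_{\infty,\Phi(\sigma)}(\Ka)$ and $L_{1,\sigma}(\Ha)\to L_{1,\Phi(\sigma)}(\Ka)$, so using \eqref{eq:attain1} and \eqref{eq:attain2} again,
\[
\sup_t\|h(it)\|_{\infty,\Phi(\sigma)}\le\|Y\|_{p,\sigma},\qquad \sup_t\|h(1+it)\|_1\le\|Y\|_{p,\sigma}.
\]
Lemma~\ref{lemma:inf} then gives $\|h(\theta)\|_{1/\theta,\Phi(\sigma)}\le\|Y\|_{p,\sigma}$ for all $\theta\in[0,1]$. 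At $\theta=1/p$ this reads $\|\Phi(Y)\|_{p,\Phi(\sigma)}\le\|Y\|_{p,\sigma}$, which is an equality by hypothesis. Invoking the equality clause of Lemma~\ref{lemma:inf} a second time,
\[
\|\Phi(f_{Y,p}(\theta))\|_{1/\theta,\Phi(\sigma)}=\|Y\|_{p,\sigma}\qquad\text{for all }\theta\in[0,1].
\]

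Comparing the two identities yields the claim $\|\Phi(f_{Y,p}(\theta))\|_{1/\theta,\Phi(\sigma)}=\|f_{Y,p}(\theta)\|_{1/\theta,\sigma}$ for every $\theta\in(0,1)$. The only subtle point is verifying that the equality clause of Lemma~\ref{lemma:inf} is applicable, which only requires equality at some interior $\theta$; both applications use $\theta=1/p\in(0,1)$, so nothing further is needed. The argument is essentially a double invocation of the rigidity part of the Hadamard three-lines principle, and the contractivity of $\Phi$ on both boundary norms is what ties the two applications together.
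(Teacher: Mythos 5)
Your proposal is correct and follows essentially the same route as the paper: apply the rigidity (equality) clause of Lemma~\ref{lemma:inf} to $\Phi\circ f_{Y,p}$, with the boundary bounds supplied by \eqref{eq:attain1}, \eqref{eq:attain2} and Proposition~\ref{prop:contraction}, so that the hypothesis at $\theta=1/p$ forces constancy of the norm along the strip. Your explicit second invocation of the equality clause for $f_{Y,p}$ itself is a step the paper leaves implicit (it can also be obtained by sandwiching $\|Y\|_{p,\sigma}=\|\Phi(f_{Y,p}(\theta))\|_{1/\theta,\Phi(\sigma)}\le\|f_{Y,p}(\theta)\|_{1/\theta,\sigma}\le\|Y\|_{p,\sigma}$), but it is the same argument.
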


\begin{proof} The function $\Phi\circ f_{Y,p}:S\to L_{\Phi(\sigma)}(\Ka)$ is bounded, continuous, holomorphic in the interior of $S$ and $\Phi\circ f_{Y,p}(1/p)=\Phi(Y)$. Using  Lemma \ref{lemma:inf}, Proposition \ref{prop:contraction} and the equalities \eqref{eq:attain1} and \eqref{eq:attain2}, we obtain
\begin{align*}
\|\Phi(Y)\|_{p,\Phi(\sigma)} &\le \max\{\sup_{t\in \mathbb R}\|\Phi(f_{Y,p}(it))\|_{\infty,\Phi(\sigma)},\sup_{t\in \mathbb R}\|\Phi(f_{Y,p}(1+it))\|_1\}\\
& \le \max\{\sup_{t\in \mathbb R}\| f_{Y,p}(it)\|_{\infty,\sigma},\sup_{t\in \mathbb R}\|f_{Y,p}(1+it)\|_1\} \\
&  =\|Y\|_{p,\sigma}=\|\Phi(Y)\|_{p,\Phi(\sigma)}.
\end{align*}
The statement now follows by the equality part in Lemma \ref{lemma:inf}.

\end{proof}

\noindent \emph{Proof of Theorem \ref{thm:main}}.  Put $p:=\alpha$, $1/p+1/q=1$. Assume that the equality holds, that is, $\|\Phi(\rho)\|_{p,\Phi(\sigma)}=\|\rho\|_{p,\sigma}$. By putting 
$\theta=1/2$ in Lemma \ref{lemma:isometry}, we obtain that 
\[
\|\Phi(f_{\rho,p}(1/2))\|_{2,\Phi(\sigma)}=\|f_{\rho,p}(1/2)\|_{2,\sigma}.
\] 
Let $X=\sigma^{-1/2q}\rho\sigma^{-1/2q}$, so that  $f_{\rho,p}(1/2)= \|\rho\|_{p,\sigma}^{1-p/2}\sigma^{1/4}X^{p/2}\sigma^{1/4}$, and  
put 
\[
\tau =\left(\Tr \sigma^{1/4}X^{p/2}\sigma^{1/4}\right)^{-1}\sigma^{1/4}X^{p/2}\sigma^{1/4}.
\]
Then $\tau\in \states(\Ha)$ and since $\tau$ is a constant multiple of $f_{\rho,p}(1/2)$, we have $\|\Phi(\tau)\|_{2,\Phi(\sigma)}=\|\tau\|_{2,\sigma}$. By Lemma \ref{lemma:banff2}, this implies that $\Phi$ is sufficient with respect to $\{\tau,\sigma\}$.
Let 
\[
U\Ha=\bigoplus_n \Ha_n^L\otimes \Ha_n^R,\qquad 
U\sigma U^*=\bigoplus_n A_n^L\otimes \sigma_n^R
\]
be a factorization as in Corollary \ref{coro:factorization}, then
$U\tau U^*=\bigoplus_n B_n^L\otimes \sigma_n^R$ for some $B_n^L\in B(\Ha_n^L)^+$. This entails that 
\[
UXU^* =\bigoplus_n C_n^L\otimes (\sigma_n^R)^{1/p}
\]
for some suitable $C_n^L\in B(\Ha_n^L)^+$, and consequently $\rho=\sigma^{1/2q}X\sigma^{1/2q}$ has a factorization of the form required in Corollary \ref{coro:factorization}. Hence $\Phi$ is sufficient with respect to $\{\rho,\sigma\}$.

\qed

\section{Concluding remarks}\label{sec:conclusion}

We have proved that equality in DPI  for $\tilde D_\alpha$ implies sufficiency for $\alpha>1$, but it is still not clear whether this is true for $\alpha\in (1/2,1)$, where DPI holds and the methods based on non-commutative $L_p$-norms can no longer be used. Note that the value $\alpha=1/2$ is excluded  since it is known that equality does not imply sufficiency in this case, \cite{moog2015quantum}. 

An algebraic equality condition for the range $\alpha\in (1/2,1)\cup (1,\infty)$  was obtained in the recent paper \cite{lrd2016data}. For $\alpha=2$, this algebraic condition is just as in our Lemma \ref{lemma:banff2}, but the proof in  \cite{lrd2016data} requires the map to be completely positive. We also remark that, as already observed in \cite{djw2015banff}, this algebraic condition can be obtained in the $L_p$-space framework for all $\alpha>1$ assuming only positivity of $\Phi$, using the fact that the spaces are uniformly convex and hence the norm of any element is attained at a unique point of the dual unit sphere. A similar uniqueness lies also at the core of the proof  in \cite{lrd2016data}.

As a final remark, note that for the main result, it is not necessary that $\Phi$ is completely positive. Indeed, the only place where more than positivity of $\Phi$ is required is the proof of Theorem \ref{thm:factorization}, but a closer look at the proof of \cite[Example 9.4]{petz2008quantum} shows that it would be enough to assume that the map $\Omega^*$ is a Schwarz map, that is, satisfying the inequality
\[
\Omega^*(X^*X)\ge \Omega^*(X^*)\Omega^*(X),\qquad \forall X\in B(\Ha).
\]
This holds if both $\Phi$ and $\Phi_\sigma$ are adjoints of a Schwarz map, which  is true for all $\sigma\in \states(\Ha)$ if and only if $\Phi$ is 2-positive, \cite[Proposition 2]{jencova2012reversibility}.


\section*{Acknowledgements} The research started from discussions with N. Datta and M. M. Wilde during the BIRS workshop 'Beyond IID in Information Theory', held in July 2015 in Banff, Canada. I am grateful to these colleagues for turning my attention to this problem and for useful discussions at the workshop and afterwards.  I am also indebted to the anonymous referees, whose valuable remarks and suggestions helped to improve the paper.
 The support from the grant No. VEGA 2/0069/16 is acknowledged as well.


\begin{thebibliography}{10}

\bibitem{ansv2008chernoff}
K.~M.~R. Audenaert, M.~Nussbaum, A.~Szkola, and F.~Verstraete.
\newblock {Asymptotic error rates in quantum hypothesis testing}.
\newblock {\em {Commun. Math. Phys.}}, {279}:{251--283}, {2008},
  arXiv:0708.4282.

\bibitem{beigi2013sandwiched}
S.~Beigi.
\newblock {Sandwiched R\'enyi Divergence Satisfies Data Processing Inequality}.
\newblock {\em {J. Math. Phys.}}, {54}:{122202}, {2013}, arXiv:1306.5920.

\bibitem{buscemi2012comparison}
F.~Buscemi.
\newblock {Comparison of Quantum Statistical Models: Equivalent Conditions for
  Sufficiency}.
\newblock {\em {Commun. Math. Phys.}}, {310}:{625--647}, {2012},
  arXiv:1004.3794.

\bibitem{cmw2016strong}
T.~Cooney, M.~Mosonyi, and M.~M. Wilde.
\newblock {Strong converse exponents for a quantum channel discrimination
  problem and quantum-feedback-assisted communication}.
\newblock {\em {Commun. Math. Phys.}}, {344}:{797–829}, {2016},
  arXiv:1408.3373.

\bibitem{datta2009min}
N.~Datta.
\newblock {Min- and Max-relative entropies and a new entanglement monotone}.
\newblock {\em {IEEE Transactions on Information Theory}}, {55}:{2816–2826},
  {2009}, arXiv:0803.2770.


\bibitem{djw2015banff}
N.~Datta, A.~Jen\v cov\'a, and M.~M. Wilde.
\newblock {Equality conditions for the sandwiched R\'enyi relative entropy}.
\newblock {Unpublished notes}, {2015}.

\bibitem{dale2014alimit}
N.~Datta and F.~Leditzky, A limit of the quantum R\'enyi divergence, J. Phys. A,
{47}:{045304}, {2014}, arXiv:1308.5961.

\bibitem{dawi2015quantum}
N.~Datta and M.~M. Wilde.
\newblock {Quantum Markov chains, sufficiency of quantum channels, and R\'enyi
  information measures}.
\newblock {\em {J. Phys. A}}, {48}:{505301}, {2015}, arXiv:1501.05636.

\bibitem{fisher1922onthemathematical}
R.~A. Fisher.
\newblock {On the mathematical foundations of theoretical statistics}.
\newblock {\em {Philosophical Transactions of the Royal Society A.}},
  {222}:{309–368}, {1922}.

\bibitem{frli2013monotonicity}
R.~L. Frank and E.~H. Lieb.
\newblock {Monotonicity of a relative R\'enyi entropy}.
\newblock {\em {J. Math. Phys.}}, {54}:{122201}, {2013}, arXiv:1306.5358.

\bibitem{hayashi2007exponet}
M.~Hayashi.
\newblock {Error exponent in asymmetric quantum hypothesis testing and its
  application to classical-quantum channel coding}.
\newblock {\em {Phys. Rev. A}}, {76}, {2007}, arXiv:quant-ph/0611013.


\bibitem{hjpw2004ssa}
P.~Hayden, R.~Jozsa, D.~Petz, and A.~Winter.
\newblock {Structure of states which satisfy strong subadditivity of quantum
  entropy with equality}.
\newblock {\em {Commun. Math. Phys.}}, {246}:{359--374}, {2004},
  arXiv:quant-ph/0304007.

\bibitem{himo2016different}
F.~Hiai and M.~Mosonyi.
\newblock {Different quantum $f$-divergences and the reversibility of quantum
  operations}.
\newblock {arXiv:1604.03089 }, {2016}.

\bibitem{hmo2008exponents}
F.~Hiai, M.~Mosonyi, and T.~Ogawa.
\newblock {Error exponents in hypothesis testing for correlated states on a
  spin chain}.
\newblock {\em {J. Math. Phys.}}, {49}, {2008}, arXiv:0707.2020.

\bibitem{hmpb2011divergences}
F.~Hiai, M.~Mosonyi, D.~Petz, and C.~Beny.
\newblock {Quantum f-divergences and error correction}.
\newblock {\em {Rev. Math. Phys.}}, {23}:{691--747}, {2011}, arXiv:1008.2529.

\bibitem{hipe1991properformula}
F.~Hiai and D.~Petz.
\newblock {The proper formula for relative entropy and its asymptotics in
  quantum probability}.
\newblock {\em {Commun. Math. Phys.}}, {143}:{99--114}, {1991}.

\bibitem{jencova2012reversibility}
A.~Jen\v{c}ov\'a.
\newblock {Reversibility conditions for quantum operations}.
\newblock {\em {Rev. Math. Phys.}}, {24}, {2012}, arXiv:1107.0453.

\bibitem{jepe2006sufficiency}
A~Jen\v{c}ov\'a and D.~Petz.
\newblock {Sufficiency in quantum statistical inference}.
\newblock {\em {Commun. Math. Phys.}}, {263}:{259--276}, {2006},
  arXiv:math-ph/0412093.

\bibitem{jepe2006idaqp}
A.~Jen\v{c}ov\'a and D.~Petz.
\newblock {Sufficiency in quantum statistical inference. A survey with
  examples}.
\newblock {\em {Inf. Dim. Anal. Quant. Prob. Rel. Top.}}, {9}:{331--351},
  {2006}, arXiv:quant-ph/0604091.

\bibitem{jeru2010unified}
A.~Jen\v{c}ov\'a and M.~B. Ruskai.
\newblock {A unified treatment of convexity of relative entropy and related
  trace functions, with conditions for equality}.
\newblock {\em {Rev. Math. Phys.}}, {22}:{1099--1121}, {2010}, arXiv:0903.2895.

\bibitem{kosaki1984applications}
H.~Kosaki.
\newblock {Applications of the complex interpolation method to a von Neumann
  algebra: Non-commutative $L_p$-spaces}.
\newblock {\em {J. Funct. Anal.}}, {56}:{26--78}, {1984}.

\bibitem{kule1951oninformation}
S.~Kullback and R.~A. Leibler.
\newblock {On information and sufficiency}.
\newblock {\em {Ann. Math. Stat.}}, {22}:{78--86}, {1951}.

\bibitem{lrd2016data}
F.~Leditzky, C.~Rouz\'e, and N.~Datta.
\newblock {Data processing for the sandwiched R\'enyi divergence: a condition
  for equality}.
\newblock {arXiv:1604.02119}, {2016}.

\bibitem{lindblad1975completely}
G.~Lindblad,  {Completely positive maps and entropy inequalities}, {\em {Commun. Math. Phys.}},
{40}:{147-151}, {1975}.

\bibitem{himo2011onthequantum}
M.~Mosonyi and F.~Hiai.
\newblock {On the quantum R\'enyi relative entropies and related capacity
  formulas}.
\newblock {\em {IEEE Transactions on Information Theory}}, {57}:{2474--2487},
  {2011}, arXiv:0912.1286.

\bibitem{moog2014strong}
M.~Mosonyi and T.~Ogawa.
\newblock {Strong converse exponent for classical-quantum channel coding}.
\newblock {arXiv:1409.3562}, {2014}.

\bibitem{moog2015quantum}
M.~Mosonyi and T.~Ogawa.
\newblock {Quantum hypothesis testing and the operational interpretation of the
  quantum R\'enyi relative entropies}.
\newblock {\em {Commun. Math. Phys.}}, {334}:{1617--1648}, {2015},
  arXiv:1309.3228.

\bibitem{mope2004sufficient}
M.~Mosonyi and D.~Petz.
\newblock {Structure of sufficient quantum coarse-grainings}.
\newblock {\em {Lett. Math. Phys.}}, {68}:{19--30}, {2004},
  arXiv:quant-ph/0312221.

\bibitem{mhre2015monotonicity}
A.~M\"uller-Hermes and  D.~Reeb, {Monotonicity of the quantum relative entropy under positive maps}, {arXiv:1512.06117},{2015}.
\bibitem{mldsft20130nquantum}
M.~M\" uller Lennert, F.~Dupuis, O.~Szehr, S.~Fehr, and M.~Tomamichel.
\newblock {On quantum R\'enyi entropies: a new generalization and some
  properties}.
\newblock {\em {J. Math. Phys.}}, {54}:{122203}, {2013}, arXiv:1306.3142.

\bibitem{nagaoka2006theconverse}
H.~Nagaoka.
\newblock { The converse part of the theorem for quantum Hoeffding bound},
  2006, arxiv:quant-ph/0611289.


\bibitem{ogna2000strong}
T.~Ogawa and H.~Nagaoka.
\newblock {Strong converse and Stein’s lemma in quantum hypothesis testing}.
\newblock {\em {IEEE Transactions on Information Theory}}, {46}:{2428–2433},
  {2000}.

\bibitem{osiy2005quantum}
T.~Ogawa, A.~Sasaki, M.~Iwamoto, and H.~Yamamoto.
\newblock {Quantum Secret Sharing Schemes and Reversibility of Quantum
  Operations}.
\newblock {\em {Phys. Rev. A}}, {72}:{032318}, {2005}, arxiv:quant-ph/0505001.

\bibitem{ohpe1993entropy}
M~Ohya and D.~Petz.
\newblock {\em {Quantum Entropy and its Use}}.
\newblock {Springer-Verlag}, {Heidelberg}, {1993}.

\bibitem{olze1999hypercontractivity}
R.~Olkiewicz and B.~Zegarlinski.
\newblock {Hypercontractivity in noncommutative $L_p$ spaces}.
\newblock {\em {J. Funct. Anal.}}, {161}:{246--285}, {1999}.

\bibitem{petz1984quasi}
D.~Petz.
\newblock {Quasi-entropies for finite quantum systems}.
\newblock {\em {Rep. Math. Phys.}}, {23}:{57-65}, {1984}.

\bibitem{petz1986sufficient}
D.~Petz.
\newblock {Sufficient subalgebras and the relative entropy of states of a von
  Neumann algebra}.
\newblock {\em {Commun. Math. Phys.}}, {105}:{123--131}, {1986}.

\bibitem{petz1988sufficiency}
D.~Petz.
\newblock {Sufficiency of channels over von Neumann algebras}.
\newblock {\em {Quart. J. Math. Oxford}}, {39}:{97--108}, {1988}.

\bibitem{petz2008quantum}
D.~Petz.
\newblock {\em {Quantum information theory and quantum statistics}}.
\newblock {Springer}, {Berlin-Heidelberg}, {2008}.

\bibitem{renyi1961onmeasures}
A.~R\'enyi.
\newblock On measures of information and entropy.
\newblock In {\em Proc. Symp. on Math., Stat. and Probability}, page 547–561.
  University of California Press, 1961.

\bibitem{ruskai2002inequalities}
M.~B. Ruskai.
\newblock {Inequalities for quantum entropy: A review with conditions for
  equality}.
\newblock {\em {J. Math. Phys.}}, {43}:{4358–4375}, {2002},
  arXiv:quant-ph/0205064.

\bibitem{shirokov2013reversibility}
M.~E. Shirokov.
\newblock {Reversibility of a quantum channel: general conditions and their
  applications to Bosonic linear channels}.
\newblock {\em {J. Math. Phys.}}, {54}:{112201}, {2013}, arXiv:1212.2354.

\bibitem{shirokov2014onequalities}
M.~E. Shirokov.
\newblock {On equalities in two entropic inequalities}.
\newblock {\em {Sbornik Mathematics}}, {205}:{1045--1068}, {2014},
  arXiv:1302.5336.

\bibitem{strasser1985statistics}
H.~Strasser.
\newblock {\em {Mathematical Theory of Statistics}}.
\newblock {de Gruyter}, {Berlin, New York}, {1985}.

\bibitem{trunov1979anoncommutative}
N.~V. Trunov.
\newblock {A noncommutative analogue of the space $L_p$}.
\newblock {\em {Izvestiya VUZ Matematika}}, {23}:{69--77}, {1979}.

\bibitem{uhlmann1977relative} 
A. Uhlmann, Relative entropy and Wigner-Yanase-Dyson-Lieb concavity in an interpolation theory, 
{\em {Commun. Math. Phys.}}, {54}:{21-32}, {1977}.




\bibitem{umegaki1962conditional}
H.~Umegaki.
\newblock {Conditional expectation in an operator algebra, IV: Entropy and
  information}.
\newblock {\em {Kodai Math. Sem. Rep.}}, {14}:{59--85}, {1962}.

\bibitem{wwy2014strong}
M.~M. Wilde, A.~Winter, and D.~Yang.
\newblock {Strong converse for the classical capacity of entanglement-breaking
  and Hadamard channels via a sandwiched R\'enyi relative entropy}.
\newblock {\em {Commun. Math. Phys.}}, {331}:{593--622}, {2014},
  arXiv:1306.1586.

\bibitem{wolf2012quantum}
M. M. Wolf. Quantum Channels and Operations - Guided Tour. Preprint at http://www-m5.ma.tum.de/foswiki/pub/M5/ Allgemeines/MichaelWolf/QChannelLecture.pdf,
  2012.

\bibitem{zolotarev1982lpspaces}
A.~A. Zolotarev.
\newblock {$L_P$-spaces with respect to a state on a von Neumann algebra, and
  interpolation}.
\newblock {\em {Izvestiya VUZ Matematika}}, {26}:{36--43}, {1982}.
\end{thebibliography}

\end{document}